\newtheorem{thm}{Theorem}
\theoremstyle{definition}
\newtheorem{fact}{Fact}
\newcommand*{\QEDB}{\hfill\ensuremath{\square}}%
\renewenvironment{proof}[1][\proofname] {\par\pushQED{\qed}\normalfont\topsep6\p@\@plus6\p@\relax\trivlist\item[\hskip\labelsep\bfseries#1\@addpunct{.}]\ignorespaces}{\popQED\endtrivlist\@endpefalse}
\begin{document}
\title{Towards Provably Invisible Network Flow Fingerprints}

\author{
	\IEEEauthorblockN{Ramin Soltani\IEEEauthorrefmark{1},
		Dennis Goeckel\IEEEauthorrefmark{1}, Don Towsley\IEEEauthorrefmark{2}, and Amir Houmansadr\IEEEauthorrefmark{2}}
	
	\IEEEauthorblockA{\IEEEauthorrefmark{1}Electrical~and~Computer~Engineering~Department,~University~of~Massachusetts,~Amherst,
		\{soltani, goeckel\}@ecs.umass.edu\\}
	\IEEEauthorblockA{\IEEEauthorrefmark{2}College of Information and Computer Sciences, University of Massachusetts, Amherst,
		\{towsley, amir\}@cs.umass.edu}
	    
                       \thanks{ This work has been supported by the National Science Foundation under grants
                       CNS-1564067 and  CNS-1525642.}
                   \thanks{ This work has been presented at the 51st Annual Asilomar Conference on Signals, Systems, and Computers,  November 2017.}
                   \thanks{Personal use of this material is permitted. Permission from IEEE must be obtained for all other uses, in any current or future media, including reprinting/republishing this material for advertising or promotional purposes, creating new collective works, for resale or redistribution to servers or lists, or reuse of any copyrighted component of this work in other works. DOI: 10.1109/ACSSC.2017.8335179}

}

\date{}
\maketitle
\thispagestyle{plain}
\pagestyle{plain}

\newtheorem{definition}{Definition}

\begin{abstract}
Network traffic analysis reveals important information even when messages are encrypted. We consider active traffic analysis via flow fingerprinting by invisibly embedding information into packet timings of flows. In particular, assume Alice wishes to embed fingerprints into flows of a set of network input links, whose packet timings are modeled by Poisson processes, without being detected by a watchful adversary Willie. Bob, who receives the set of fingerprinted flows after they pass through the network modeled as a collection of independent and parallel $M/M/1$ queues, wishes to extract Alice's embedded fingerprints to infer the connection between input and output links of the network. 
We consider two scenarios: 1) Alice embeds fingerprints in all of the flows; 2) Alice embeds fingerprints in each flow independently with probability $p$. Assuming that the flow rates are equal, we calculate the maximum number of flows in which Alice can invisibly embed fingerprints while having those fingerprints successfully decoded by Bob. Then, we extend the construction and analysis to the case where flow rates are distinct, and discuss the extension of the network model.
\end{abstract}
\textbf{Keywords:} \textit{Computer Networks, Network Security, Network Analysis, Network Security Analysis, Network De-Anonymization, Anonymous Networks, Covert Communication, Fingerprinting, Flow Fingerprinting, Network Flow Fingerprinting, Watermarking, Flow Watermarking, Network Flow Watermarking, Timing Channel, Covert Channel, Queuing Networks, $M/M/1$ queues, Timing Capacity of Queues, Shared Processor Queues, Queues in Tandem.} 

\section{Introduction}
Security in computer networks has emerged as an important area of research. Although encryption hides information sent from a transmitter, network traffic analysis can extract important information from the size, count, and timings of the packets. For instance, when attackers relay their flows through compromised nodes called stepping stones, traffic analysis can trace back the attackers~\cite{staniford1995holding,zhang2000detecting}. Also, traffic analysis can find the correlations in traffic patterns to link incoming/outgoing flows and break anonymity~\cite{syverson2001towards}.

Flow watermarking and flow fingerprinting are two active traffic analysis methods that work by perturbing packet timings of flows according to specific patterns to embed information in them. In flow watermarking, the information embedded in a flow is one bit, i.e., either the flow is marked or not. However, in many applications, more than one bit of information is required to be embedded in the packet timings of the flows. Flow fingerprinting provides the solution for such applications by embedding several bits of information in the flows such as the information about the party that has embedded the fingerprint, the source of the flow, or the location at which the flow has been fingerprinted~\cite{houmansadr2012design}.

Active traffic analysis has become an important area of research due to the increasing use of encryption. Wang et al.~\cite{wang2003robust} proposed to embed flow watermarks in inter-packet delays to detect stepping stones, and Wang et al.~\cite{wang2005tracking} used an interval-based flow watermark to compromise anonymized VoIP conversations. Houmansadr et al. proposed the first non-blind watermark, RAINBOW~\cite{houmansadr2009rainbow}, offering significantly higher invisibility compared to prior designs, and  SWIRL~\cite{houmansadr2011swirl} was designed to resist aggregated-flows attacks. Houmansadr et al.~\cite{houmansadr2013need} was the first to introduce flow fingerprinting, and TagIt~\cite{rezaei2017tagit} introduced the first blind flow fingerprint. 

\begin{figure}
	\begin{center}
		\includegraphics[width=\textwidth/2 ,height=\textheight,keepaspectratio]{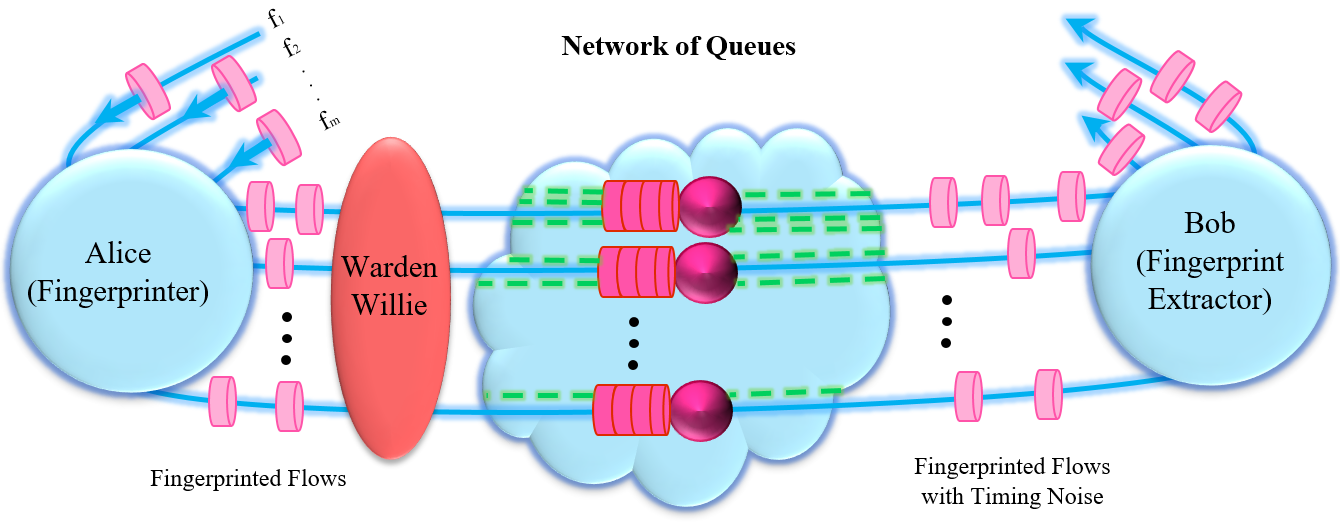}
	\end{center}
	\caption{Alice may embed fingerprints in flows. Bob receives the potentially fingerprinted flow after it passes through a network of $M/M/1$ queues, which are independent and parallel. Each queue is shared between a fingerprinted flow (shown by a blue solid line) and interfering flows (shown by green dotted lines).} 
		\label{fig:SysMod}
	\end{figure}

As previous active traffic analysis designs are based on ad hoc heuristics (such as moving packets into secret time intervals), they do not offer any theoretical guarantees on the invisibility-performance trade-off. In this work, we take a systematic approach to design a flow fingerprinting system with provable information-theoretic guarantees on invisibility and performance (e.g., number of fingerprints). Consider a network containing $m$ independent, parallel, work conserving, and First In First Out (FIFO) queues with independent exponential service times where the $i^{\mathrm{th}}$ queue conveys the $i^{\mathrm{th}}$ flow ($f_i$) from the $i^{\mathrm{th}}$ input link to the $i^{\mathrm{th}}$ output link, and conveys interfering flows independent of $f_i$ (See Fig.~\ref{fig:SysMod}). The network is anonymous to Alice and Bob such that they do not know the connections between input and output links. Alice has access to the input links and is able to buffer the packets and release them when she desires to embed fingerprints in packet timings of the flows. Adversary Willie is between Alice and the network and observes $f_1,f_2,\ldots,f_m$ after they are accessed by Alice and wishes to detect if Alice is embedding fingerprints in the flows or not. Bob observes the packet timings of the flows on the output links and wishes to extract Alice's fingerprints.

We consider the following problem: given the time interval $[0,T]$, can Alice embed flow identifier fingerprints invisible to Willie in the packet timings such that Bob can extract them successfully to de-anonymize the network and, if yes, what is the maximum number $m$ of flows that can be fingerprinted? For the case where the packet timings of the flows are governed by Poisson process, we calculate the asymptotic expression for the number of flows that can be fingerprinted as a function of $T$, for two scenarios: 1) Alice embeds fingerprints in all of the flows; 2) Alice embeds fingerprints in each flow independently with some small probability $p$.

The remainder of the paper is organized as follows. In
Section~\ref{prerequisites}, we present the system model, definitions and the metrics employed for the two scenarios of interest. Then, we provide constructions and analyses for the two fingerprinting scenarios in Sections~\ref{scen1} and Sections~\ref{scen2}. In Section~\ref{disc} we discuss the results, the extension of the scenarios to distinct flow rates, and the extension of the network model to more general networks. Finally, we conclude in Section~\ref{sec:con}.
\section{System Model, Definitions, And Metrics} \label{prerequisites}

\subsection{System Model} \label{sec:sysmod}
Alice has access to a set of input links  $L_1^{(I)},L_2^{(I)},\ldots,L_m^{(I)}$ of a network, and is able to buffer packets and release them when she desires. The packet flow conveyed over $L_i^{(I)}$ is denoted by $f_i$ ($1\leq i \leq m$), and $\mathcal{F}=\{f_1,f_2,\ldots,f_{m}\}$ is the set of flows accessed by Alice. Bob receives the flows $f_1, f_2,\ldots,f_m$ from the output links $L_1^{(O)},L_2^{(O)},\ldots,L_m^{(O)}$ of the network, respectively. The network is anonymous such that Alice and Bob do not know the connections between input and output links; they wish to infer this in the interval $[0,T]$, and thus de-anonymize the network.

Alice embeds a unique flow identifier fingerprint in each flow by altering its packet timings according to a secret codebook of fingerprints shared with Bob, and Bob extracts the fingerprints from the observed flows.  Warden Willie, who is between Alice and the network, observes the input links and wishes to detect if Alice embeds fingerprints in them or not (see Fig.~\ref{fig:SysMod}). Willie knows the fingerprinting scheme Alice will employ if she chooses to embed fingerprints, but he does not have access to the codebook of fingerprints.

Alice, Bob, and Willie know that the packet timings of the flows $f_1,f_2,\ldots, f_m$ are modeled by Poisson processes with rates $\lambda_1,\ldots,\lambda_m$, respectively. The network consists of $m$ independent single server queues with exponential service times, i.e., $M/M/1$ queues, which are work conserving and First In First Out (FIFO) discipline. Each queue has multiple inputs and outputs such that the $i^{\mathrm{th}}$ queue ($q_i$) conveys $f_i$ from the input link $L_{i}^{(I)}$ to the output link $L_{i}^{(O)}$, and also conveys interfering Poisson flows independent of $f_i$. We denote the sum of the rates of the interfering flows on $q_i$ by $\lambda'_i$. The service rate of $q_i$ is $\mu_i$, and the queues are stable, i.e., $\lambda_i + \lambda'_i\leq \mu_i$.

We consider two scenarios. In Scenario~1 (analyzed in Section~\ref{scen1}), the flow rates are equal ($\lambda_i=\lambda$), and Alice embeds fingerprints in all of the flows of $\mathcal{F}$. In Scenario~2 (analyzed in Section~\ref{scen2}), the flow rates are equal, but Alice embeds fingerprints in each flow independently with probability $p$. For each scenario, we calculate the number of flows in which Alice can invisibly and reliably embed fingerprints, as described precisely next.

\subsection{Definitions}
Willie's hypotheses are $H_0$ (Alice did not embed fingerprints) and $H_1$ (Alice embedded fingerprints). We denote by $\mathbb{P}_{\mathrm {FA}}$ the probability of rejecting $H_0$ when it is true (type I error or false alarm), and $\mathbb{P}_{\mathrm {MD}}$ the probability of rejecting $H_1$ when it is true (type II error or mis-detection). We assume that Willie uses classical hypothesis testing with equal prior probabilities and seeks to minimize his probability of error, $\mathbb{P}_{\mathrm e}^{(\mathrm w)}=\frac{\mathbb{P}_{\mathrm {FA}} + \mathbb{P}_{\mathrm {MD}}}{2}$; the generalization to arbitrarily prior probabilities is available in~\cite{bash_jsac2013}. 

\begin{definition} (Invisibility) Alice's fingerprinting is {\em invisible} (covert) if and only if she can lower bound Willies' probability of error ($\mathbb{P}_{\mathrm e}^{(\mathrm w)}$) by $\frac{1}{2}-\epsilon$ for any $\epsilon>0$, asymptotically. This definition is similar to that of covertness developed in~\cite{bash_jsac2013}, and used in covert communication~\cite{soltani2014covert,soltani2015covert,soltani2016allerton,soltani2017covert}

\end{definition}
\begin{definition} (Reliability) Fingerprinting for each flow is {\em reliable} if and only if $\mathbb{P}_{\mathrm{f}}\leq \zeta$ for any $\zeta>0$, where $\mathbb{P}_{\mathrm{f}}$ is the probability of the failure event which occurs when
\begin{itemize}
\item Alice cannot successfully embed a fingerprint since she does not have a packet to release when she needs to; 
\item Alice runs out of fingerprints because the number of fingerprints in her codebook is less than the number of flows in which she wishes to embed fingerprints; or
\item Bob cannot extract a fingerprint  successfully.
\end{itemize}
\end{definition}

\begin{definition} \label{def:3} (Lambert-W function) The Lambert-W function is the inverse function of $f(W)=We^W$.
\end{definition}

\begin{definition} (The Kullback–Leibler divergence) If $f$ and $g$ are probability measures over a set $\mathcal{S}$, then the Kullback–Leibler divergence between $f$ and $g$ is:
\begin{align}
\mathcal{D}(f||g) = \int_{\mathcal{S}} f(x) \log \frac{f(x)}{g(x)} dx
\end{align}

\end{definition}
	 
In this paper, we use standard Big-O notation~\cite[ch. 3]{cormen2009introduction}.

\section{Scenario~1: All flows are fingerprinted} \label{scen1}
In this section we consider Scenario~1: Alice embeds fingerprints in all of the input flows of a network with equal rates in the time interval $[0,T]$, and Bob extracts the fingerprints from the output flows to infer the connection between input and output flows of the network. Because Alice is able to buffer packets and release them when she desires, she changes the packet timings of the flows to embed fingerprints in them according to a secret fingerprinting codebook shared with Bob. Each of the fingerprints is a flow identifier and consists of a sequence of inter-packet delays to be employed to embed the corresponding fingerprint. To successfully change the packet timings of a flow according to the chosen codeword, Alice must have a packet in her buffer to transmit at the appropriate times. To account for this, Alice uses a two phase scheme for each flow $f_i$, similar to the one adopted in~\cite[Section IV]{soltani2015covert}.  First, Alice slows down $f_i$ to buffer packets; then, during the fingerprinting phase, she releases the packets from her buffer with the inter-packet delays prescribed by the codeword corresponding to the fingerprint, while buffering the arriving packets of $f_i$.  

We calculate the asymptotic expression for the number of flows that can be fingerprinted as a function of $T$ using this strategy.
\begin{thm} \label{thm:allflows} 
	Consider the setting in Section~\ref{sec:sysmod}. In a set $\mathcal{F}$ containing $m$ flows with rates $\lambda_i=\lambda$, Alice and Bob can invisibly and reliably track all of the flows in the time interval $[0,T]$, as long as $m=\mathcal{O}(T/W(T))$, where $W(\cdot)$ is the Lambert-W function.
\end{thm}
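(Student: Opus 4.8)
The plan is to split the argument into an invisibility (covertness) constraint that caps $m$ from above and a matching construction that attains it, with the Lambert-W emerging from balancing the two. First I would fix a per-flow construction based on the two-phase scheme: for each $f_i$, a buffering phase in which Alice slightly lowers the departure rate to accumulate a reservoir of packets, followed by a fingerprinting phase in which she releases those packets with the inter-packet delays prescribed by the codeword while continuing to buffer fresh arrivals. The codebook is a set of at least $m$ codewords so that every flow receives a distinct identifier; since distinguishing $m$ flows requires $\log m$ bits, each codeword must carry $\Theta(\log m)$ bits.

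For invisibility, I would note that under $H_1$ Alice perturbs all $m$ flows independently, so Willie's optimal test is governed by the aggregate Kullback--Leibler divergence, which by independence of the flows is the sum of the $m$ per-flow divergences. Using the standard covert-communication bound (Pinsker's inequality), $\mathbb{P}_{\mathrm e}^{(\mathrm w)}\ge\frac12-\epsilon$ holds asymptotically provided this total divergence stays bounded; hence each flow is allotted a detectability budget of order $1/m$. I would then compute the per-flow divergence between the fingerprinted departure process and the nominal Poisson($\lambda$) process -- dominated by the rate mismatch during the buffering phase -- and express it through the buffering duration and the rate reduction, so that the order-$1/m$ budget translates into a ceiling on how much Alice may perturb each flow.

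On the reliability side I would control three failure modes in turn. (i) Buffer underflow: using a Chernoff/queueing argument in the spirit of the two-phase analysis of~\cite[Section IV]{soltani2015covert}, I would show the buffering phase can be made long enough that Alice almost surely has a packet to release whenever the codeword demands one. (ii) Decoding through the queue: since the sojourn time of each $M/M/1$ queue is exponential with rate $\mu_i-\lambda_i-\lambda'_i$, I would choose the inter-packet delays (or apply a coding margin) large enough that Bob recovers each codeword with vanishing error. (iii) Codebook exhaustion: the codebook is sized to exceed $m$. The per-flow channel analysis should then show the number of reliably decodable covert bits scaling with the product of the detectability budget and $T$, i.e.\ as $T/m$.

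The final step is to match the two sides: the $\Theta(\log m)$ bits needed per flow must not exceed the $\Theta(T/m)$ reliable covert bits the budget permits, giving the balance $m\log m=\Theta(T)$. By the defining relation $W(T)e^{W(T)}=T$ of Definition~\ref{def:3}, the value $m=T/W(T)$ satisfies $\log m=W(T)$ and hence $m\log m=T$, so this balance is exactly $m=\Theta(T/W(T))$ and the construction attains every such $m$. The main obstacle I anticipate is precisely this balancing act: making the per-flow perturbation small enough that the summed divergence over all $m$ flows vanishes, yet large enough (with a deep enough buffer) that $\Theta(\log m)$ bits survive the queue jitter and decode correctly -- and extracting from this tension the clean $m\log m=\Theta(T)$ scaling rather than a spurious extra logarithmic factor.
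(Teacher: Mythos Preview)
Your outline matches the paper's argument almost step for step: the two-phase (buffer, then fingerprint) construction, a per-flow KL budget of $O(1/m)$ via Pinsker, reliability via buffer-underflow / queue-decoding / codebook-size checks, and the balance $m\log m=\Theta(T)$ resolved as $m=T/W(T)$.

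The one place to tighten is the sentence ``the number of reliably decodable covert bits scaling with the product of the detectability budget and $T$.'' In the paper this is \emph{not} a covert-capacity statement: the fingerprinting phase is an exact Poisson$(\lambda)$ process, so its KL contribution is zero and Bob decodes at the full timing-channel capacity $C=\min_i\lambda\log\bigl((\mu_i-\lambda'_i)/\lambda\bigr)$, a constant independent of $m$. The $\Theta(T/m)$ bits arise instead through the buffer chain: covertness forces $\Delta=O(1/\sqrt{mT_1})$; the expected buffer $\Delta T_1=O(\sqrt{T_1/m})$ must dominate the $\Theta(\sqrt{T_2})$ fluctuations of phase two to prevent underflow; hence $T_2=O(T_1/m)=O(T/m)$, and the available bits are $CT_2=\Theta(T/m)$. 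Your item~(i) is precisely this computation --- just make explicit that \emph{it} is what produces the $T/m$, rather than a separate ``budget $\times$ time'' channel formula. If you try to get $T/m$ from a square-root-law intuition you will land on $\sqrt{T/m}$ and exactly the spurious extra factor you flag as the main obstacle in your final paragraph.
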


\begin{proof}
\textbf{Construction}: 
Per above, Alice divides the time interval of length $T$ into two phases: a buffering phase of length $T_1$ and a fingerprinting phase of length $T_2$ such that $T=T_1+T_2$. During the buffering phase, Alice slows down the packets of each flow $f_i$, from rate $\lambda$ to rate $\lambda-\Delta$, in order to build up packets in her buffer, ensuring that with high probability, she will not run out of packets during the fingerprinting phase of length $T_2=T-T_1$ (see Fig.~\ref{fig:Twophased}).
\begin{figure}
\begin{center}
\includegraphics[width=\textwidth/2,height=\textheight,keepaspectratio]{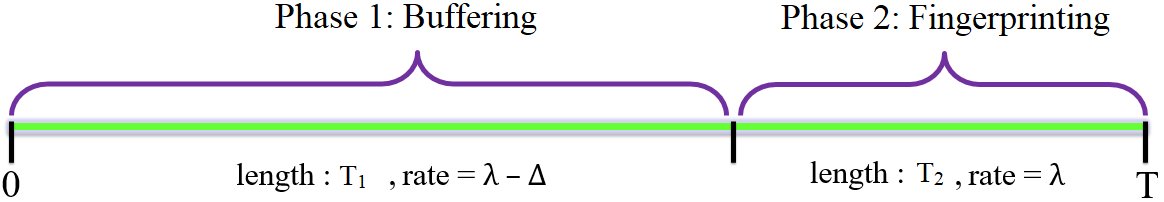}
\end{center}
 \caption{Two phase scheme: Alice divides the duration of time $T$ into two phases with lengths $T_1$ and $T_2=T-T_1$. In the first phase, Alice slows down each flow from the rate $\lambda$ to the rate $\lambda-\Delta$, and buffers the excess packets. In the next phase, she transmits packets at rate $\lambda$ according to the inter-packet delays in the codeword corresponding to the fingerprint to be embedded.}
 \label{fig:Twophased}
 \end{figure}

Alice and Bob share a codebook to which Willie does not have access.The codebook construction is similar to that of~\cite{soltani2015covert,soltani2016allerton}. To build the codebook, a set of $m$ codewords $\{C(W_l)\}_{l=1}^{l=m}$ are independently generated according to realizations of a Poisson process with parameter $\lambda$. In particular, to generate a codeword $C(W_l)$, first a random variable $N$ is generated according to a Poisson distribution with mean $\lambda T_2$. Then, $N$ inter-packet delays are generated by placing $N$ points uniformly and independently on an interval of length $T_2$ \cite{verdubitsq} (see  Fig.~\ref{fig:codebook}). Therefore, each codeword of the codebook is a series of inter-packet delays and corresponds to a unique flow identifier fingerprint. To embed a fingerprint in a flow $f_i$, Alice applies the inter-packet delays of the chosen codeword to the packets of the flow $f_i$.  
\begin{figure}
\begin{center}
\includegraphics[width=\textwidth/2,height=\textheight,keepaspectratio]{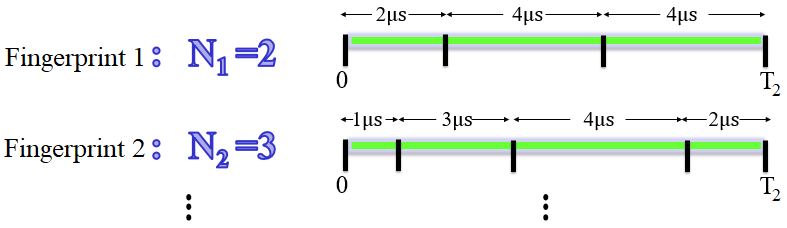}
\end{center}
 \caption{Codebook generation: Alice and Bob share a secret codebook which specifies the sequence of inter-packet delays corresponding to each fingerprint. To generate each codeword, a number $N$ is generated according to the Poisson distribution with parameter $\lambda T_2$, and then $N$ points are placed uniformly and randomly in the time interval $[0,T_2]$.}
 \label{fig:codebook}
 \end{figure}
 
\textbf{Analysis}: ({\em Invisibility}) The analysis of invisibility follows from that of covertness in~\cite[Theorem 2]{soltani2015covert}. In the first phase, Alice slows down the flows from rate $\lambda$ to rate $\lambda-\epsilon \sqrt{{2 \lambda}/{m T_1}}$, where $\epsilon>0$, while lower bounding Willie's error probability ($\mathbb{P}_{\mathrm e}^{(\mathrm w)}$) by $\frac{1}{2}-{\epsilon}$. During the second phase, the packet timings for each flow is an instantiation of a Poisson process with rate $\lambda$ and hence the traffic pattern is indistinguishable from the pattern that Willie expects to observe. Hence, the scheme is invisible.

({\em Reliability}) By~\cite[Definition 2]{verdubitsq}, Bob can successfully extract the fingerprint from $f_i$ as long as $T_2$ is large and: 
\begin{align}
\label{eq:22} \frac{\log m}{T_2}< C(q_i),
\end{align}
\noindent where $C(q_i)$ is the capacity of $q_i$ for conveying information through packet delays. By~\cite[Proposition 1]{liutiming}, $C(q_i)=\lambda \log{\left({(\mu_i-\lambda'_i)}/{\lambda}\right)}$, where $\lambda'_i$ is the sum of rates of the interfering flows passing through $q_i$. Define
\begin{align}
\label{eq:c} C=\lambda \log{\left({\underset{i}{\mathrm{min}} \{\mu_i-\lambda'_i\}}/{\lambda}\right)}.
\end{align}
\noindent Since~\eqref{eq:22} holds for all $1\leq i\leq m$, for large $T_2$: 
\begin{align}
\label{eq:cap} \frac{\log m}{T_2}<  {{\underset{i}{\mathrm{min}} \{C(q_i)\}}}=C.
\end{align}

Note that Alice does not run out of fingerprints since the number of fingerprints in her codebook equals to the number of flows. Finally, similar to the reliability analysis in~\cite{soltani2015covert}, we can show that if
\begin{align}
\label{eq:lph10} T_1&=\frac{T m\alpha/\epsilon^2}{1+m\alpha/\epsilon^2},\\
\label{eq:lph20}  T_2&=T-T_1=\frac{T}{1+m\alpha/\epsilon^2},
\end{align}
\noindent where 
\begin{align}
\label{eq:alpha}\alpha=(2 \mathrm{erf}^{-1}(1-{\zeta}))^2,
\end{align}
\noindent then $\mathbb{P}_{\mathrm{f}}\leq \zeta$.  Thus Alice's fingerprinting is reliable. 

({\em Number of flows}) By~\eqref{eq:cap} and~\eqref{eq:lph20}, we require
\begin{align}
\label{eq:29}\log m < C T_2 = \frac{C T}{1 + m \alpha \epsilon^2}.
\end{align}
\noindent Next, we show that if 
\begin{align}
\label{eq:11} m= \frac{1}{2} \min\left\{ \frac{\epsilon^2}{\alpha}\left({\frac{TC}{W(TC)}-1}\right),\frac{TC}{W(TC)}\right\},
\end{align}
\noindent then~\eqref{eq:29} is satisfied. Consider the following fact:

\begin{fact} \label{f:1} For $x,y>0$, if $x \log x = y$, then $x=y/W(y)$, where $W(\cdot)$ is the lambert-W function (see Definition~\ref{def:3}).   
\end{fact}
\noindent \textbf{Proof.} By Definition~\ref{def:3}, $W(y) e^{W(y)} = y$. Therefore, $ W(y) = \log \frac{y}{W(y)}$. Consequently, 
\begin{align}
\nonumber x \log x= \frac{y}{W(y)} \log{\frac{y}{W(y)}} = \frac{y}{W(y)} W(y) = y 
\end{align}
\QEDB 

\noindent If $m\geq 1+m \alpha/\epsilon^2$, since $m < \frac{TC}{W(TC)}$, then Fact~\ref{f:1} yields:
$$\nonumber TC > m \log{m} \geq (1+m \alpha/\epsilon^2)\log(m).$$ 

\noindent If $m< 1+m \alpha/\epsilon^2$, since $ m<\frac{\epsilon^2}{\alpha}\left({\frac{TC}{W(TC)}-1}\right)$, then:
\begin{align}
\nonumber TC &> (1+m \alpha/\epsilon^2) \log{(1+m \alpha/\epsilon^2)}= (1+m \alpha/\epsilon^2) \left(\log{m}+\log{\frac{1+m \alpha/\epsilon^2}{m}}\right) \geq (1+m \alpha/\epsilon^2) \log{m}.
\end{align}
\noindent Consequently, Alice and Bob can invisibly and reliably track $m=\mathcal{O}(T/W(T))$ flows. Note that by~\eqref{eq:lph10}, $T_1 \to \infty$ as $T \to \infty$, as required by the proof for invisibility of the second phase.  Also, by~\eqref{eq:11} and~\eqref{eq:lph20} we can show that $T_2 \to \infty$ as $T \to \infty$, as required by the proof for reliability.


\end{proof}
 
\section{Scenario~2: Each flow is fingerprinted independently with probability $p$} \label{scen2}
In this section we consider Scenario~2. In a set containing $m$ network input flows with equal rates, Alice embeds fingerprints into each flow independently with probability $p$ in the time interval $[0,T]$, and Bob extracts the fingerprints from the output flows to infer the connection between input and output flows of the network. Similar to Scenario~1, we show that employing a two phase scheme, Alice can embed a unique flow identifier fingerprint in the chosen flows by altering their packet timings according to a secret fingerprint codebook shared between Alice and Bob but unknown to Willie. We calculate the asymptotic expression for the number of flows that can be fingerprinted as a function of $T$ using this strategy
\begin{thm} \label{thm:pflows} 
 Consider the setting in Section~\ref{sec:sysmod}. In a set $\mathcal{F}$ containing $m$ flows with rates $\lambda_i=\lambda$, if Alice embeds fingerprints in each flow independently with probability $p$, Alice and Bob can invisibly and reliably track $\mathcal{O}\left({e^{ CT-\sqrt{C T \alpha}}}\right)$ flows in the time interval $[0,T]$, where $C$ and $\alpha$ are given in~\eqref{eq:c} and~\eqref{eq:alpha}, respectively, as long as 
\begin{align}
\label{eq:m1} m&= \frac{e^{ 2CT-\sqrt{C T \alpha}} }{2 \epsilon^{2}},  \\
\label{eq:p1} p&= \epsilon^2 e^{{-C  T}}.
\end{align}
\end{thm}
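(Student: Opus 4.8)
The plan is to reuse the two-phase construction of Theorem~\ref{thm:allflows}, but now let Alice flip an independent coin of bias $p$ for each flow and slow down (buffer) only the selected flows during the first phase, drawing their fingerprints from a codebook whose size matches the number of selected flows. With the prescribed $m$ and $p$, the expected number of fingerprinted flows is $mp=\tfrac12 e^{CT-\sqrt{CT\alpha}}$, which is exactly the claimed order, so the counting step reduces to a concentration argument: the realized number of selected flows is $\mathrm{Binomial}(m,p)$, and a Chernoff bound shows it stays within a $(1\pm o(1))$ factor of $mp$, so that a codebook of size $\Theta(mp)$ suffices with probability at least $1-\zeta$ (this controls the ``runs out of fingerprints'' failure mode).

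For reliability I would split $T=T_1+T_2$ and fix the two free parameters as follows. The decoding requirement is, as in~\eqref{eq:cap}, $\log(\text{codebook size})<CT_2$; with codebook size $\Theta(mp)$ this reads $\log(mp)\approx CT-\sqrt{CT\alpha}<CT_2$, which I make tight by taking the buffering length $T_1=\Theta(\sqrt{\alpha T/C})$, so that $CT_2=C(T-T_1)=CT-\Theta(\sqrt{CT\alpha})$. The no-underflow requirement is the same Gaussian-deficit (i.e.\ $\mathrm{erf}^{-1}$) bound as in Theorem~\ref{thm:allflows}: the accumulated buffer $\Delta T_1$ of a selected flow must exceed $\Theta(\sqrt{\alpha\lambda T_2})$, and since $T_2\sim T$ this forces the per-flow slowdown up to a \emph{constant} $\Delta=\Theta(\sqrt{\lambda C})$, in contrast to the vanishing $\Delta\propto 1/\sqrt{m}$ of Scenario~1. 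I would then verify $T_1\to\infty$ and $T_2\to\infty$, both of which hold since $T_1=\Theta(\sqrt{T})$ and $T_2\sim T$, as the covertness and timing-capacity arguments require.

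The crux is invisibility, and this is where Scenario~2 genuinely differs. Because only a $p$-fraction of flows are perturbed, Willie's per-flow observation over the buffering window is a mixture $(1-p)f_0+pf_1$, with $f_0,f_1$ the rate-$\lambda$ and rate-$(\lambda-\Delta)$ Poisson laws on $[0,T_1]$, tested against $f_0$ under $H_0$. The first-order term in the per-flow divergence vanishes, leaving $\mathcal{D}\big(f_0\,\|\,(1-p)f_0+pf_1\big)\approx\tfrac12 p^2\chi^2(f_1\|f_0)$ with $\chi^2(f_1\|f_0)=e^{\Delta^2 T_1/\lambda}-1$; summing over the $m$ independent flows and applying Pinsker's inequality, invisibility holds provided $mp^2\,e^{\Delta^2 T_1/\lambda}\lesssim\epsilon^2$. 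The delicate point is that here $\Delta$ is constant and $\Delta^2 T_1/\lambda=\Theta(\sqrt{CT\alpha})\to\infty$, so each selected flow is individually highly detectable ($\chi^2$ is exponentially large); covertness is recovered solely from the exponentially small coin bias, since $mp^2=\tfrac12\epsilon^2 e^{-\sqrt{CT\alpha}}$ and the buffering phase is made long enough that $\Delta^2 T_1/\lambda$ stays below $\sqrt{CT\alpha}$, whence the product $mp^2 e^{\Delta^2 T_1/\lambda}\to 0$.

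The main obstacle I anticipate is making this covertness bound rigorous rather than heuristic: the expansion $\mathcal{D}\approx\tfrac12 p^2\chi^2$ must be justified when $\chi^2$ is exponentially large, so that the neglected higher-order-in-$p$ terms (which carry even larger moments such as $E[r^2]$, $r=f_1/f_0$) are genuinely dominated, and this is exactly where $p=\Theta(e^{-CT})$ being far smaller than $e^{-\sqrt{CT\alpha}}$ is used. One must then show the three constraints are simultaneously feasible with the correct constants, namely buffering ($\Delta^2 T_1/\lambda\gtrsim \alpha T/T_1$), covertness ($\Delta^2 T_1/\lambda\lesssim\sqrt{CT\alpha}$), and capacity ($CT_1\lesssim\sqrt{CT\alpha}$); the first two squeeze $T_1=\Theta(\sqrt{\alpha T/C})$, and feeding this into the third produces the $\sqrt{CT\alpha}$ correction in the exponent. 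Everything else — the Binomial concentration, the buffering bound, and the timing-capacity invocation of~\cite{liutiming,verdubitsq} — is inherited essentially verbatim from Theorem~\ref{thm:allflows}.
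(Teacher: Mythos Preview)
Your overall strategy matches the paper's: same two-phase scheme, same Bernoulli selection, the same asymptotics $T_1=\Theta(\sqrt{\alpha T/C})$ and constant $\Delta=\Theta(\sqrt{\lambda C})$, the same concentration argument for the number $N_f$ of selected flows (the paper uses the WLLN where you use Chernoff), and the same governing inequality $mp^2\bigl(e^{\Delta^2 T_1/\lambda}-1\bigr)\lesssim\epsilon^2$ for invisibility. Your explicit scaling analysis of $T_1$, $T_2$, and $\Delta$ recovers exactly the asymptotics of the paper's closed-form choices $T_1=\dfrac{T\alpha}{\ln(1+\epsilon^2/(2mp^2))+\alpha}$ and $\Delta=\sqrt{(\lambda/T_1)\ln(1+\epsilon^2/(2mp^2))}$, once one plugs in $\epsilon^2/(2mp^2)=e^{\sqrt{CT\alpha}}$.

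The one substantive difference is how the invisibility bound is derived. You bound $\mathcal{D}(\mathbb{P}_0\|\mathbb{P}_1)$ via the second-order expansion $\tfrac12 p^2\chi^2(f_1\|f_0)$, and you correctly flag as your ``main anticipated obstacle'' that the neglected higher-order-in-$p$ terms must be controlled even though $\chi^2=e^{\Delta^2 T_1/\lambda}-1$ is exponentially large. The paper sidesteps this entirely by bounding the \emph{other} direction $\mathcal{D}(\mathbb{P}_1\|\mathbb{P}_0)$: writing the per-flow term as $\mathbb{E}_p\bigl[\ln\bigl(p\,e^{\Delta T_1}((\lambda-\Delta)/\lambda)^n+(1-p)\bigr)\bigr]$ and applying the elementary inequality $\ln(1+x)\le x$ yields \emph{exactly} $p^2\bigl(e^{\Delta^2 T_1/\lambda}-1\bigr)$ per flow, with no truncation and no higher moments to control. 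So your anticipated obstacle is real for your chosen KL direction but evaporates by reversing it; everything else you wrote goes through essentially verbatim.
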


\begin{proof}
\textbf{Construction}: The construction is similar to that of Scenario~1. Alice's codebook contains $M$ fingerprints where
\begin{align}
\label{eq:bigM} M&=(1/2) e^{  \frac{C T}{1+ {\alpha}/{\ln\left(1+e^{{\sqrt{CT  \alpha}}} \right)}}},
\end{align}
To decide whether to embed a fingerprint in a flow or not, Alice generates independent Bernoulli random variables $X_1,\ldots,X_m$ with $\mathbb{P}(X_i=1)=p$, and she embeds a fingerprint in $f_i$ if and only if $X_i=1$.

\textbf{Analysis}: ({\em Invisibility}) We analyze the invisibility of the first and second phases separately. In the first phase, the joint pdfs of Willie's observations  under $H_0$ (Alice did not embed fingerprints), and $H_1$ (Alice embedded fingerprints) are:
\begin{align}
\nonumber \mathbb{P}_0&= \prod_{i=1}^{m} \mathbb{P}_{\lambda}(n_i),\\ 
\nonumber \mathbb{P}_1&=\prod_{i=1}^{m} \left(p\mathbb{P}_{\lambda-\Delta}(n_i)+(1-p)\mathbb{P}_{\lambda}(n_i)\right),
\end{align}
\noindent When Willie applies the optimal hypothesis test to minimize $\mathbb{P}_{\mathrm e}^{(\mathrm w)}$ \cite[Eq.1]{soltani2017covert}:
\begin{align} 
\label{eq:0} \mathbb{P}_{\mathrm e}^{(\mathrm w)} \geq \frac{1}{2}- \sqrt{\frac{1}{8} \mathcal{D}(\mathbb{P}_1 || \mathbb{P}_0)},
\end{align}
\noindent where $\mathcal{D}(\mathbb{P}_1 || \mathbb{P}_0)$ is the relative entropy between  $\mathbb{P}_1$ and  $\mathbb{P}_0$. Denote by $\mathbb{E}_{p}[\cdot]$ the expected value with respect to the probability measure $\left(p\mathbb{P}_{\lambda-\Delta}(n_i)+(1-p)\mathbb{P}_{\lambda}(n_i)\right)$. Then:
\begin{align}
\nonumber \mathcal{D}(\mathbb{P}_1||\mathbb{P}_0)&= \sum_{i=1}^{m} \mathcal{D}(p\mathbb{P}_{\lambda-\Delta}(n_i)+(1-p)\mathbb{P}_{\lambda}(n_i)||\mathbb{P}_{\lambda}(n))\\
\nonumber &= m   \mathbb{E}_{p} \left[\ln\left(p e^{\Delta T_1} \left(\frac{ \lambda -\Delta  }{\lambda  }\right)^n+(1-p)\right)\right],\\
  \nonumber & \stackrel{(a)}{\leq}   m \mathbb{E}_{p} \left[p e^{\Delta T_1} \left(\frac{ \lambda -\Delta  }{\lambda  }\right)^n-p\right],\\
& \label{eq:13}\stackrel{(b)}{=} m p^2    ( e^{ \Delta^2 T_1/\lambda}-1).
\end{align}
\noindent where $(a)$ is true since $\ln(1+x)\leq x$ for all $x \in \mathbb{R}$, and $(b)$ is true since $ \mathbb{E}_{p} \left[  \left(\frac{ \lambda -\Delta  }{\lambda  }\right)^k\right]
 =p e^{-\Delta T_1} e^{ \Delta^2 T_1/\lambda} + (1-p) e^{- \Delta T_1}$. Let  $\Delta = \sqrt{\frac{\lambda}{T_1} \ln(1+\frac{\epsilon^2}{2 m p^2})}$. Then,~\eqref{eq:13} yields $\mathcal{D}(\mathbb{P}_1||\mathbb{P}_0) \leq \epsilon^2/2$. By~\eqref{eq:0}, Willies' probability of error ($\mathbb{P}_{\mathrm e}^{(\mathrm w)}$) is lower bounded by $\frac{1}{2}-{\epsilon}$, and thus the first phase is invisible. The analysis of the invisibility for the second phase is the same as that of Scenario~1. Thus, the fingerprinting scheme is invisible. 

({\em Reliability}) Similar to the reliability analysis of Theorem~\ref{thm:allflows}, we can show that the probability that Alice runs out of packets for each flow is upper bounded by $\zeta$ as long as
\begin{align}
\label{eq:lph1} T_1&=\frac{T \alpha}{\ln(1+\frac{\epsilon^2}{2 m p^2})+\alpha},\\
\label{eq:lph2} T_2&=T-T_1=\frac{T }{1+\alpha/\ln(1+\frac{\epsilon^2}{2 m p^2})},
\end{align}
\noindent By~\eqref{eq:m1},~\eqref{eq:p1}, $\frac{\epsilon^2}{2 mp^2}= {e^{\sqrt{C T \alpha}}}$, and thus $T_1,T_2\to \infty$ as $T \to \infty$.

Next, we show that Bob can successfully extract Alice's fingerprints. By~\eqref{eq:bigM} and~\eqref{eq:lph2},
\begin{align}
\nonumber \frac{\log{M} }{T_2}&<  C
\frac{1+\alpha/\ln(1+\frac{\epsilon^2}{2 m p^2})}{ 1+ {\alpha}/{\ln\left(1+e^{{\sqrt{CT \alpha}}} \right)}}=C. 
\end{align}
\noindent where the last step is true since $\frac{\epsilon^2}{2 mp^2}= {e^{\sqrt{C T \alpha}}}$. Hence,~\eqref{eq:cap} is satisfied, and thus Bob successfully extracts each fingerprint.

Furthermore, since $M-mp$ is an increasing function of $T$, by the weak law of large numbers (WLLN) we can show that Alice does not run out of fingerprints. Hence, Alice's fingerprinting is reliable. 
  
({\em Number of flows}) By~\eqref{eq:m1},~\eqref{eq:p1}, $mp=(1/2){e^{ CT-\sqrt{C T \alpha}} }$, and by the WLLN, 
\begin{align}
\forall \gamma>0: \lim\limits_{T \to \infty} \mathbb{P}(N_f-mp>\gamma) = 0.
\end{align}
Therefore, Alice and Bob can invisibly and reliably track $N_f=\mathcal{O}(e^{ CT-\sqrt{C T \alpha}})$ flows.   
\end{proof}
\section{Discussion} \label{disc}
\subsection{Source of the gain in Scenario 2}
The result for Scenario 2 indicates a much larger fingerprint dictionary can be generated and employed covertly than in Scenario 1.  Note that~\eqref{eq:p1} implies that in Scenario 2, a small portion of the flows are fingerprinted. Intuitively, because Willie has to investigate a large number of flows to look for alterations in the timings of a relatively (very) small random subset of those flows, in particular in the first phase, this makes covertness much easier to achieve and leads to the significant gains observed.
\subsection{Extension to distinct rates}\label{scen2.5}
When Scenarios 1 and 2 are extended to distinct flow rates, Alice can build a codebook in which the rate of the codewords is $\lambda_{\mathrm{min}}=\min{(\lambda_1,\ldots,\lambda_m)}$. To embed a fingerprint in a flow $f_i$, Alice first scales the corresponding codeword $(\tau_1,\ldots,\tau_N)$ by a factor  ${\lambda_i}/{\lambda_{\mathrm{min}}}$ and applies the inter-packet delays $(\frac{\lambda_i \tau_1}{\lambda_{\mathrm{min}}} ,\ldots,\frac{\lambda_i \tau_N}{\lambda_{\mathrm{min}}})$ to the first $N+1$ packets of the flow. If Alice receives more than $N+1$ packets in the fingerprinting phase, she releases the excess packets according to random independent inter-packet delays generated from the pdf of an exponential random variable with mean $\lambda_i^{-1}$. Bob rescales the flow $f_i$ by a factor of ${\lambda_{{\mathrm{min}}}}/{\lambda_i}$ and uses the codebook to extract the corresponding fingerprint. 

We can show that if $\Delta_i = \epsilon \sqrt{{2 \lambda_i}/{m T_1}}$ and $\Delta_i = \sqrt{(\lambda_i/T_1) \ln(1+\frac{\epsilon^2}{2 m p^2})}$ in the first phases of Scenarios 1 and 2, respectively, then Alice's buffering is invisible. Note that the fingerprinted flow $f_i$ in the second phase is a realization of a Poisson process with rate $\lambda_i$, and thus it is indistinguishable from the pattern that Willie expects to observe. Hence, the scheme is invisible.

Note that the time to transmit a fingerprint in $f_i$ is $T_2  {\lambda_{\mathrm{min}}}/{\lambda_i}$. Therefore Bob can successfully extract Alice's codeword from $f_i$ as long $T_2$ is large and 
 \begin{align}
\label{eq:23} \frac{\log M}{T_2 \lambda_{{\mathrm{min}}}/\lambda_i}< \lambda_i C'(q_i)=\log{\left(\frac{\mu_i-\lambda'_i}{\lambda_i}\right)}.
\end{align}
Since~\eqref{eq:23} is true for all $1\leq i \leq m$, 
 \begin{align}
\nonumber \frac{\log M}{T_2}\leq   {{\underset{i}{\mathrm{min}} \left\{\frac{\lambda_{\mathrm{min}}}{\lambda_i}C'(q_i)\right\}}} =C'.
 \end{align}
  Finally, we can obtain an expression for the number of flows by replacing $C$ with $C'$ in the results of Theorems~\ref{thm:allflows} and~\ref{thm:pflows}.

\subsection{Extension of the network}\label{scen4}
By~\cite{mimcilovic2006mismatch}, we can extend our model to $m$ parallel routes where each route consists of multiple $M/M/1$ queues in tandem. On each route $r_i$, queues are shared between a main flow $f_i$ and interfering flows and the interfering flows are independent. Furthermore, by~\cite[Corollary 3.3]{kelly2011reversibility}, we can relax the condition of independent interference for queues on each route and extend our model to a feedforward multiclass product form network~\cite{baskett1975open}
 containing $m$ parallel routes where each route $r_i$ conveys flow $f_i$ and consists of multiple $M/M/1$ queues in tandem shared between a main flow $f_i$ and interfering flows.
\section{Conclusion} \label{sec:con}
In this paper, we presented the construction and analysis for embedding fingerprints in packet timings of flows. In a setting where a set of flows visit Alice, adversary Willie, a network of $m$ independent parallel $M/M/1$  queues with background traffic, and Bob respectively, we established a construction where Alice alters the packet timings in the time interval $[0,T]$, according to a secret codebook shared with Bob, to embed flow identifier fingerprints in them without being detected by Willie. 
We considered two scenarios: 1) Alice embeds fingerprints in all of the flows; 2) Alice embeds fingerprint in each flow independently with probability $p$, and calculated the asymptotic expression for the number of flows that can be fingerprinted as a function of $T$. 





\renewcommand{\baselinestretch}{1.02} 
\bibliographystyle{ieeetr}

\end{document}